\newtheorem{theorem}{Theorem}[section]
\newtheorem{proposition}[theorem]{Proposition}
\newtheorem{corollary}[theorem]{Corollary}
\newtheorem{remark}[theorem]{Remark}
\title{Multiple Regression Analysis of Unmeasured Confounding}
\author{Brian Knaeble}
\address{Department of Computer Science, Utah Valley University, Orem, UT}
\email{bknaeble@uvu.edu}
\author{R. Mitchell Hughes}
\keywords{partial identification, omitted variables bias, uncertain models, coefficients of partial determination, sensitivity analysis}
\begin{document}

\begin{abstract}
Whereas confidence intervals are used to assess uncertainty due to unmeasured individuals, confounding intervals can be used to assess uncertainty due to unmeasured attributes. Previously, we have introduced a methodology for computing confounding intervals in a simple regression setting in a paper titled ``Regression Analysis of Unmeasured Confounding." Here we extend that methodology for more general application in the context of multiple regression. Our multiple regression analysis of unmeasured confounding utilizes subject matter knowledge about coefficients of determination to bound omitted variables bias, while taking into account measured covariate data. Our generalized methodology can be used to partially identify causal effects. The methodology is demonstrated with example applications, to show how coefficients of determination, being complementary to randomness, can support sensitivity analysis for causal inference from observational data. The methodology is best applied when natural sources of randomness are present and identifiable within the data generating process. Our main contribution is an algorithm that supports our methodology. The purpose of this article is to describe our algorithm in detail. In the paper we provide a link to our GitHub page for readers who would like to access and utilize our algorithm.
\end{abstract}

\maketitle

\section{Introduction}
\label{introduction}
A technique for computing ``confounding intervals'' was published in a prior article on regression analysis of unmeasured confounding \citep{Knaeble2020}. A confounding interval assesses uncertainty due to unmeasured attributes, and it can be contrasted with common confidence intervals that assess uncertainty due to unmeasured individuals. Separately, the concept of partial identification has been developed by economists \citep{Tamer2010}, with utility for causal inference within the Rubin causal model \citep{Imbens2015}. In the discussion subsection of \cite{Knaeble2020} the possibility of more general confounding intervals is mentioned but not developed in detail. The purpose of this article is to introduce a more general methodology to compute confounding intervals from multiple regression models. The more general methodology can be thought of as a way to analyze omitted variables bias in a multiple regression setting, and when additional causal assumptions are met then the more general methodology can be viewed as a method for partial identification of a causal effect. To the extent that regression is useful for causal inference \cite{McNamee2005,Angrist2009}, the methodology of this paper does facilitate causal inference from observational data and natural experiments. However, the methodology is best appreciated as a technical methodology to summarize the sensitivity of a multiple regression coefficient to omitted variables bias, and in that sense we speak of partial identification of a regression coefficient that has been defined from unmeasured covariate data.

It is instructive for us to clarify what exactly is meant by partial identification and how it can support causal interpretation of a regression coefficient. To clarify the matter we briefly revisit an example from an earlier paper on reversals of least-squares estimates \citep{KD}. The raw data from the example came from a study that investigated the effect of rice consumption on urinary arsenic concentrations in children \citep{Davis2012}. The researchers regressed the log-transformed arsenic measurements onto dietary rice consumption and reported an expected increase in urinary arsenic of $15.6\%$ for each 1/4 cup increase in dietary rice consumption. They then adjusted for measured covariates including age, sex, and gender by using multiple regression \citep{McNamee2005}, and their estimate changed from a $15.6\%$ increase to a $14.3\%$ increase. Adjustment for additional covariates could perhaps change their estimate by an arbitrarily large amount. With such model uncertainty \citep{Chat} can we even be sure that the effect is positive (see \citep{KD})? The authors conclude by stating that their study only ``suggests that rice consumption is a potential source of arsenic exposure,'' displaying an awareness of model uncertainty and the limitations of classical statistical methodology for the purpose of causal inference from observational data. On the one hand there is a chosen, best, statistically significant estimate, and on the other hand there is a theoretical, causal quantity, namely the causal effect of a 1/4 cup increase in rice consumption on blood arsenic concentrations. But, it is not possible to identify the causal effect with an estimate, even in the limit of an infinitely large sample size, because rice consumption was not randomized within an experiment. Adjusting for measured covariates is not the same as adjusting for measured and unmeasured covariates. To address the issue of unmeasured covariates, we may, however, partially identify the causal effect. Partial identification, in contrast to more precise point identification, bounds the hypothetical causal effect within upper and lower limits. The purpose of this paper is to demonstrate how to compute upper and lower limits for that partial identification from a multiple regression model and some additional knowledge. The additional knowledge is in the form of upper bounds on coefficients of determination for the exposure (e.g. rice consumption) and the outcome (e.g. urinary arsenic concentration).

Upper bounds on coefficients of determination are reasonable and consistent with what has been revealed by modern experiments regarding (the inadequacy of) theories of hidden variables to explain quantum phenomena \citep{Gill2014}. One could argue that coefficients of determination approach values of $100\%$ in the limit as more and more information is obtained, but the philosophy of this paper is contrary to such (super) determinism. The methodology of this paper is best understood from a philosophical viewpoint that accepts inherent limits on our ability to predict events. Opposing determinism is randomness, and it has been said that randomization is the reasoned basis for causal inference in experiments \citep{Fisher,Rose10}. There are natural sources of randomness \citep{Rosen}, and it is possible to quantify sufficient randomness for causal inference from observational studies \citep{Knaeble2023,Knaeble2024,10.48550/arxiv.2407.19057}. Ultimately, it is the presence of natural randomness that allows us to partially identify a causal effect from observational data using the methodology of this paper. That randomness is reflected in the upper bounds that we place on coefficients of determination. This way of thinking is valuable when analyzing data from so-called natural, natural-experiments \citep{Rosen}, especially when there is a random component in the natural data generating process. 

The data generating process can be made explicit with structural causal models, and causal graphs have long been used to facilitate causal inference \citep{Wright,Spirtes}. Judea Pearl has received the Turing Award for his work on causation \cite{Pearl2000}, and an important result regarding covariate selection is described in \citep[Section 3.3.1]{Pearl2009}. It is not the details of the structural causal model that matter but the structure of the graph, and there is an algorithm that takes a graph as an input and produces an admissible set of covariates as an output. Here ``admissible'' means that it is sufficient to condition on that admissible set of covariates to produce a consistent estimator of the causal effect of interest. If there exists a true graph of the data generating process and we can measure the treatment, the outcome, and that admissible set of covariates, then we can achieve point identification (not just partial identification) of the causal effect of interest, provided we have enough observations, i.e. a large enough sample of data. However, Imbens and Rubin have criticized the use of causal graphs \citep{Imbens1995}, and Rosenbaum describes a causal graph as an enormous web of assumptions \citep{Rose95}. While there are algorithms to learn graphs from data \citep{Nog}, any causal graph that purports to represent the data generating process is susceptible to a critique that it is not exactly correct. We take such a critique seriously, and accept that no causal graph will be perfect, and that acceptance is consistent with our plan for partial identification rather than point identification. 

We take a practical approach to covariate selection that is similar to the approach taken to formulate the disjunctive cause criterion for covariate selection \citep{VanderWeele2011}. Our introduced methodology does not assume that a causal graph is exactly correct, nor does it assume that an admissible set of covariates has been measured. We do, of course, recommend that covariates be selected in a principled manner \citep{VandPrinc}, but we acknowledge that the selection process will not be perfect. For the purpose of covariate selection there is the common-cause criterion which is a conservative approach, the pre-treatment criterion which is a more liberal approach, and the disjunctive cause criterion which strikes an intermediate balance \citep{VandPrinc}. In the context of a natural, natural experiment \citep{Rosen} there will be some randomness in the data generating process, which may mean that the more liberal pre-treatment criterion will suffice \citep{DM} for our purposes, because the fine-tuned counterexamples of \citep{PearlDM} require sufficient determinism in the data generating process. The ample randomness criterion of \citep[Section 4.3]{Knaeble2024} is a good guide in our context. It states a preference for covariates that measure fixed characteristics of individuals like time and place of birth or genomic variables, rather than randomly fluctuating covariates, e.g. a transient food craving. Similar ideas are expressed in \citep{Pim} and the classic funnel experiment of Deming \citep{Deming}. Regardless of how covariates have been selected, our methodology does not assume that any particular set of measured covariates is admissible in the sense of \citep{Pearl2009}. We merely assume the existence of a set of unmeasured covariates so that the union of the selected and measured covariates with that unmeasured set of covariates is admissible. That said, our methodology is more precise when covariates are selected following the principles of \citep{VandPrinc} and \citep[Section 4.3]{Knaeble2024}.
%
%
%
%

Knowledge of causal graphs is not required to understand our methodology. Our methodology is not rooted in any particular philosophy of causal inference. Those readers who work with propensity probabilities to approximate a randomized experiment will find value in our work; see \citep{Knaeble2023, Knaeble2024}. Also, those readers accustomed to causal graphs will find value in our methodology for sensitivity analysis in that it relaxes some assumptions. Our approach to sensitivity analysis has been inspired in part by the methods described in \citep{Frank}, \citep{SAWA}, and \citep[Chapter 3.4]{Rose10}. Our work is best understood as an attempt to expand the scope and applicability of sensitivity analysis for causal inference from observational data. Our intuitive sensitivity parameters open the door to partial identification in addition to sensitivity analysis, and we view each of our example applications from multiple perspectives to emphasize how our methodology supports both sensitivity analysis and partial identification. Our main contribution is our algorithm, which summarizes a multiple regression model to support causal interpretation of its slope coefficients. The ideas behind the algorithm may be useful beyond regression \citep{Pearl2013}.

The remainder of this paper is organized as follows. Technical notation, assumptions, mathematical results, and the algorithm itself are introduced in Section \ref{methods}. Example applications are described in Section \ref{application}. Connections to natural experiments and instruments, and the relevance of exclusion restrictions and SUTVA are discussed in Section \ref{discussion}. Also, related methods for analysis of omitted variables bias in a regression setting are discussed in Section \ref{discussion}, and we conclude the paper with an explanation for how our introduced methodology differs from those related methods, along with some tips for practical application of our introduced methodology. 
\section{Methods}
\label{methods}
Here we describe how to transform bounds on coefficients of determination so that the algorithm of \citep{KOA} can be applied. We introduce notation and derive some mathematical results. The context is that of an observational study. 
\subsection{Notation}We assume throughout that the sample size $n$ is sufficiently large. There is a single, continuous response variable $y$, a single, explanatory treatment variable $x$, $p$ additional, measured covariates $w=\{w_1,...,w_p\}$, and $q$ unmeasured, omitted variables $u=\{u_1,...,u_q\}$. We write $y$, $x$, $w$, $w_i$, $u$, and $u_j$ for both the variable names and also the associated column vectors and matrices of data. We allow the matrix $w=[w_1,...,w_p]$ to contain column vectors that represent interactions between measured covariates, and we allow the matrix $u=[u_1,...,u_q]$ to contain column vectors that represent interactions between covariates in the union of measured and unmeasured covariates, but those interactions can not involve $x$ nor $y$. To make sure the covariate indices match the vector indices we allow the measured covariate set $w=\{w_1,...,w_p\}$ and the unmeasured covariate set $u=\{u_1,...,u_q\}$ to contain named covariates that correspond to the interaction terms. We assume throughout that any set of column vectors is a linearly independent set. Since we are interested in slope coefficients of regression models we may assume without loss of generality that the column vectors $y$ and $x$ and each column vector within the matrices $w=[w_1,...,w_p]$ and $u=[u_1,...,u_q]$ are mean zero vectors. The variables of $w$ are measured, confounding variables, and the set of unmeasured covariates $u$ is defined as a set of unmeasured covariates that renders $\{w,u\}$ admissible in the sense of \citep{Pearl2009}. We discuss the times of definition for those variables in Section \ref{discussion}. We assume that such a $u$ exists, and use it to define the following, theoretical regression model:
\begin{equation}
\label{themod}
y=\beta_{x.y|w,u}x+\beta_{w_1.y|w,u}w_1+...+\beta_{w_p.y|w,u}w_p+\beta_{u_1.y|w,u}u_1+...+\beta_{u_q.y|w,u}u_q.
\end{equation}  
That model and all of our regression models are fit with the principle of least squares. The theoretical model in (\ref{themod}) is not fit to measured data because $u$ is unmeasured, but we can fit the following, practical, regression model, since $w$ is measured:
\begin{equation}
\label{pramod}
y=\beta_{x.y|w}x+\beta_{w_1.y|w}w_1+...+\beta_{w_p.y|w}w_p.
\end{equation} 
In both of those models and in all subsequent models a slope coefficient has subscripts that are to be interpreted as follows: left of the vertical bar and left of the dot is a subscript to clarify which explanatory variable is relevant to that particular coefficient, left of the vertical bar and right of the dot is a subscript to clarify which response vector is of interest, and right of the vertical bar are subscripts to clarify which covariate vectors have been included in the model. In (\ref{pramod}), the confounding variables in $w$ are to be selected in a principled way \citep{VandPrinc} to support causal interpretation of $\beta_{x.y|w}$, but point identification of our main coefficient of interest $\beta_{x.y|w,u}$ with $\beta_{x.y|w}$ is not justified because additional confounding variables may be in the omitted set $u$. Note that the practical model in (\ref{pramod}) need not be well fit (regression diagnostics are not required). All that is required is for the model in (\ref{themod}) to be well defined, meaning, in theory, that if $u$ were measured, then the model of (\ref{themod}) would be well fit.

Let $\hat{y}_w$, $\hat{x}_w$, and (for each $i=1,...,q$) $\hat{u_i}_w$ be the least-squares projections of $y$, $x$, and $u_i$ onto the span of $w$. The vectors of residuals are then $y-\hat{y}_w$, $x-\hat{x}_w$, and (for each $i=1,...,q$) $u_i-\hat{u_i}_w$, and when needed to simplify notation we denote those vectors of residuals with $r(y;w)$, $r(x;w)$, and (for each $i=1,...,q$) $r(u_i;w)$, respectively. Define the set of residual vectors $u-\hat{u}_w=\{u_i-\hat{u_i}_w\}_{i=1}^q$, and when needed denote it with $r(u;w)$. Let $\hat{y}_{w,u}$ and $\hat{x}_{w,u}$ be the least-squares projections of $y$ and $x$ onto the span of $[w,u]$. Write $|\cdot|$ for the L2 norm, and define the coefficients of determination $R^2_{w,u;y}=|\hat{y}_{w,u}|/|y|$ and $R^2_{w,u;x}=|\hat{y}_{w,u}|/|x|$. Subscripts of $R^2$ coefficients are to be interpreted as follows: the subscript to the right of the semi colon represents a vector that will be projected using the principle of least squares onto the span of the vectors represented by the subscript to the left of the semi colon. 

\subsection{Bounding coefficients of determination}
As mentioned in Section \ref{introduction}, the ideas of \citep{Gill2014} along with variations on the techniques of \citep{Knaeble2023,Knaeble2024,10.48550/arxiv.2407.19057} applied in the context of natural, natural experiments (see \cite{Rosen}) can bound coefficients of determination below unity. We do not go into those details here, but assume for the purpose of partial identification that subject matter specialists are able to produce upper bounds $B_y$ and $B_x$ to satisfy
\begin{equation}
\label{upperbounds}
0\leq R^2_{w;y}\leq R^2_{w,u;y}\leq B_y<1 \textrm{~and~} 0\leq R^2_{w;y}\leq R^2_{w,u;y}\leq B_x<1.
\end{equation}
We may, for the purpose of sensitivity analysis, treat $B_y$ and $B_x$ as variables, and we demonstrate how to do that in Section \ref{application}.

\subsection{Adjusting for covariates}
There is more than one way to adjust for covariates when conducting causal inference with regression models. Define vectors of residuals $y-\hat{y}_{w,u}$ and $x-\hat{x}_{w,u}$ and denote them with $r(y;w,u)$ and $r(x;w,u)$ respectively. To adjust for $\{w,u\}$ we may regress $y$ onto $[x,w,u]$ or we may regress $r(y;w,u)$ onto $r(x;w,u)$. There is also a middle way: we may ``partial out'' \citep{CohenApplied} only $w$ and regress $r(y,w)$ onto $[r(x,w),r(u,w)]$. That middle way produces a slope coefficient of interest, expressed in our notation as  $\beta_{r(x;w).r(y;w)|r(u;w)}$.
\begin{proposition}
\label{Danny}
With $\beta_{r(x;w).r(y;w)|r(u;w)}$ as just defined and $\beta_{x.y|w,u}$ as defined in (\ref{themod}) the following identity holds:
\begin{equation}\label{slopeidentify}\beta_{r(x;w).r(y;w)|r(u;w)}=\beta_{x.y|w,u}.\end{equation}
\end{proposition}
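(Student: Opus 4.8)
The plan is to recognize the claimed identity as an instance of the Frisch--Waugh--Lovell ``partialling out'' theorem, together with one elementary observation about spans. Throughout, let $M_V$ denote the residual operator $I-P_V$ associated with least-squares projection $P_V$ onto the column span of a matrix $V$. Under the standing linear independence assumptions, the OLS coefficient on a regressor $z$ in a fit of a response $v$ to $z$ together with further regressors $V$ equals $\langle v,\,M_V z\rangle/|M_V z|^2$, a quantity that depends on the remaining regressors only through $\mathrm{span}(V)$. In this language $\beta_{x.y|w,u}=\langle y,\,M_{[w,u]}x\rangle/|M_{[w,u]}x|^2$, while $\beta_{r(x;w).r(y;w)|r(u;w)}$ is the analogous ratio built from $r(y;w)$, $r(x;w)$, and $r(u;w)$.

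First I would observe that $\mathrm{span}[w,u]=\mathrm{span}[w,\,r(u;w)]$: each $u_i=\hat{u_i}_w+r(u_i;w)$ with $\hat{u_i}_w\in\mathrm{span}(w)$, and conversely $r(u_i;w)=u_i-\hat{u_i}_w\in\mathrm{span}[w,u]$. Hence $M_{[w,u]}=M_{[w,r(u;w)]}$, and since the coefficient on $x$ depends on the other regressors only through their span, $\beta_{x.y|w,u}=\beta_{x.y|w,r(u;w)}$, the coefficient on $x$ in the OLS fit of $y$ to $[x,\,w,\,r(u;w)]$. Next I would apply Frisch--Waugh--Lovell to that fit, partialling out the block $w$: the coefficient on $x$ there equals the coefficient on $M_w x$ in the OLS fit of $M_w y$ to $[M_w x,\,M_w r(u;w)]$. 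Now $M_w y=r(y;w)$, $M_w x=r(x;w)$, and --- the key simplification --- $r(u;w)$ is by construction orthogonal to $\mathrm{span}(w)$, so $M_w r(u;w)=r(u;w)$. Thus the reduced regression is precisely $r(y;w)$ on $[r(x;w),\,r(u;w)]$, whose coefficient on $r(x;w)$ is $\beta_{r(x;w).r(y;w)|r(u;w)}$. Chaining the two equalities gives $\beta_{x.y|w,u}=\beta_{x.y|w,r(u;w)}=\beta_{r(x;w).r(y;w)|r(u;w)}$.

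I do not expect a genuinely hard step; the main care needed is bookkeeping. One must be sure that the coefficient on $x$ is truly invariant under swapping $u$ for $r(u;w)$ (this is exactly where the span identity and the partialled-regressor formula for an OLS coefficient are used), and one must invoke FWL in the form that partials out only the sub-block $w$ while leaving both $x$ and $r(u;w)$ in the model. If a self-contained argument is preferred to citing FWL, the same conclusion follows by writing the normal equations for the fit of $y$ to $[x,w,r(u;w)]$ and using the orthogonality of $r(u;w)$, $r(x;w)$, and $r(y;w)$ to $\mathrm{span}(w)$ to decouple the $w$-block from the rest, then matching the surviving equations against the normal equations of the reduced regression of $r(y;w)$ on $[r(x;w),r(u;w)]$.
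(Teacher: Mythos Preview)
Your argument is correct. The paper does not actually supply a proof here but defers to \cite[Proposition~3.1]{KBB}, so your self-contained Frisch--Waugh--Lovell derivation goes beyond what is in the present text; the two moves you make---replacing $u$ by $r(u;w)$ via the span identity $\mathrm{span}[w,u]=\mathrm{span}[w,r(u;w)]$, then partialling out the $w$-block and using $M_w\,r(u;w)=r(u;w)$---constitute the standard route to this identity and are carried out cleanly.
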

\noindent A proof of Proposition \ref{Danny} is described in \citep[Proposition 3.1]{KBB}. By (\ref{slopeidentify}) we may partially identify $\beta_{x.y|w,u}$ by partially identifying $\beta_{r(x;w).r(y;w)|r(u;w)}$. We do so with the aid of (\ref{upperbounds}) and the main algorithm of \citep{KOA}. 

\subsection{The CI algorithm}
\label{CI}
Here we summarize the main algorithm of \citep{KOA} in the language and notation of this paper. We refer to the main algorithm of \citep{KOA} as the confounding interval (CI) algorithm. The CI algorithm can be used to produce lower and upper bounds $L(B_x,B_y)$ and $U(B_x,B_y)$ to partially identify $\beta_{r(x;w).r(y;w)|r(u;w)}$, i.e. to satisfy the following inequalities:
\begin{equation}\label{LU}L(B_x,B_y)\leq \beta_{r(x;w).r(y;w)|r(u;w)}\leq U(B_x,B_y).\end{equation}
Those lower and upper bounds depend on which covariates are included in $\{w,u\}$, and more explicitly on $B_x$ and $B_y$ of (\ref{upperbounds}). The CI algorithm takes inputs obtainable from measured $[y,x,w]$ via the residual vectors $r(y;w)$ and $r(x;w)$. Those residual vectors are obtained by regressing $y$ onto $w$ and then $x$ onto $w$. From those residual vectors we obtain the following sufficient parameters, which are inputs to the CI algorithm: the ratio of standard deviations $\sigma_{r(y;w)}/\sigma_{r(x;w)}$ and Pearson's linear correlation coefficient $\rho(r(x;w),r(y;w))$. The CI algorithm has additional inputs that are specified on the basis of subject matter knowledge: lower and upper bounds on $R^2_{r(u;w);r(x;w)}$, lower and upper bounds on $R^2_{r(u;w);r(y;w)}$, and lower and upper bounds on the more complicated Pearson's linear correlation coefficient $\rho(f(r(x;w),r(u,w)),f(r(y;w),r(u,w))$, where $f(r(x;w),r(u,w))$ is the vector of fitted values when $r(x,w)$ is projected onto $r(u,w)$ using the principle of least-squares, and $f(r(y;w),r(u,w))$ is the vector of fitted values when $r(y,w)$ is projected onto $r(u,w)$ using the principle of least-squares. Here we simply use the default bounds of $-1\leq\rho(f(r(x;w),r(u,w)),f(r(y;w),r(u,w))\leq 1$ that hold because $\rho$ is Pearson's linear correlation coefficient.

\subsection{Bounding residual coefficients of determination}
To adapt the CI algorithm to our context, it remains to derive bounds on $R^2_{r(u;w);r(x;w)}$ and  $R^2_{r(u;w);r(y;w)}$ from the bounds of (\ref{upperbounds}). 

\begin{proposition}
\label{mainprop}
With $R^2_{r(u,w);r(y;w)}$, $R^2_{r(u,w);r(x;w)}$, $R^2_{w,u;y}$, $R^2_{w,u;x}$, $R^2_{w;y}$, and $R^2_{w;x}$ as defined previously, the following identities hold: 
\begin{equation}\label{mainpropeq}R^2_{r(u,w);r(y;w)}=\frac{R^2_{w,u;y}-R^2_{w;y}}{1-R^2_{w;y}}\textrm{~and~}R^2_{r(u,w);r(x;w)}=\frac{R^2_{w,u;x}-R^2_{w;x}}{1-R^2_{w;x}}.\end{equation}
\end{proposition}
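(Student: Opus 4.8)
The plan is to reduce everything to the orthogonal decomposition of the column space of $[w,u]$ together with two applications of the Pythagorean identity. First I would record the structural fact underlying the whole computation: since $u_i=\hat{u_i}_w+r(u_i;w)$ with $\hat{u_i}_w\in\mathrm{span}(w)$ and each $r(u_i;w)$ orthogonal to $\mathrm{span}(w)$, the column space $\mathrm{span}[w,u]$ splits as the orthogonal direct sum $\mathrm{span}(w)\oplus\mathrm{span}\big(r(u;w)\big)$, where the standing linear-independence assumption on the columns is what makes the dimensions add up. Consequently the least-squares projection onto $\mathrm{span}[w,u]$ is the sum of the projection onto $\mathrm{span}(w)$ and the projection onto $\mathrm{span}\big(r(u;w)\big)$.

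Second, I would apply this to $y$. Decomposing $y=\hat{y}_w+r(y;w)$ and projecting onto $\mathrm{span}\big(r(u;w)\big)$, the summand $\hat{y}_w$ contributes nothing because it lies in $\mathrm{span}(w)$, which is orthogonal to $\mathrm{span}\big(r(u;w)\big)$; hence the projection of $y$ onto $\mathrm{span}\big(r(u;w)\big)$ coincides with the projection of the $w$-residual $r(y;w)$ onto $\mathrm{span}\big(r(u;w)\big)$. Writing $\pi$ for that latter projection, we get $\hat{y}_{w,u}=\hat{y}_w+\pi$ with $\hat{y}_w\perp\pi$, so the Pythagorean theorem gives $|\hat{y}_{w,u}|^2=|\hat{y}_w|^2+|\pi|^2$, and likewise $|y|^2=|\hat{y}_w|^2+|r(y;w)|^2$ (this one being immediate from $y=\hat{y}_w+r(y;w)$ with the two terms orthogonal).

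Third, I would simply divide. Reading each $R^2$ as the ratio of the squared norm of the relevant fitted vector to the squared norm of its response, we have $R^2_{r(u;w);r(y;w)}=|\pi|^2/|r(y;w)|^2$, $R^2_{w,u;y}=|\hat{y}_{w,u}|^2/|y|^2$, and $R^2_{w;y}=|\hat{y}_w|^2/|y|^2$. Substituting $|\pi|^2=|\hat{y}_{w,u}|^2-|\hat{y}_w|^2$ and $|r(y;w)|^2=|y|^2-|\hat{y}_w|^2$ and then dividing numerator and denominator by $|y|^2$ yields the first claimed identity; the second is the identical computation with $x$ in place of $y$ throughout.

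This argument is essentially routine Hilbert-space bookkeeping, and I do not expect a serious obstacle. The one step that genuinely carries the content — and therefore deserves to be spelled out carefully — is the second one: recognizing that projecting $y$ onto $\mathrm{span}\big(r(u;w)\big)$ is the same as projecting the residual $r(y;w)$ onto it, since this is exactly what turns the ``nested-model'' difference $R^2_{w,u;y}-R^2_{w;y}$ into a bona fide residual coefficient of determination. A secondary point worth stating explicitly is that the orthogonal splitting of $\mathrm{span}[w,u]$ relies on the linear-independence hypothesis adopted in the notation subsection.
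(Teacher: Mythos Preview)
Your proposal is correct and follows essentially the same approach as the paper: both arguments hinge on the orthogonal decomposition $\mathrm{span}[w,u]=\mathrm{span}(w)\oplus\mathrm{span}(r(u;w))$, the two Pythagorean identities $|y|^2=|\hat{y}_w|^2+|r(y;w)|^2$ and $|\hat{y}_{w,u}|^2=|\hat{y}_w|^2+|\pi|^2$, and the observation that projecting $y$ onto $\mathrm{span}(r(u;w))$ coincides with projecting $r(y;w)$ there. The only cosmetic difference is ordering---you front-load the direct-sum statement, whereas the paper invokes the equality $\hat{y}_{w,r(u;w)}=\hat{y}_{w,u}$ at the end---but the logical content is identical.
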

\begin{proof}
We prove $R^2_{r(u,w);r(y;w)}=\frac{R^2_{w,u;y}-R^2_{w;y}}{1-R^2_{w;y}}$. The proof of $R^2_{r(u,w);r(x;w)}=\frac{R^2_{w,u;x}-R^2_{w;x}}{1-R^2_{w;x}}$ is the same but with $x$ in place of $y$. Recall that $r(y;w)=y-\hat{y}_w$; by sum of squares we have
\begin{equation}
\label{firstsumsquares}
||y||^2=||y-\hat{y}_w||^2+||\hat{y}_w||^2.
\end{equation}
Also, since each column of $w$ is orthogonal to each column of $r(u;w)=u-\hat{u}_w$ we have
\begin{equation}
\label{secondsumsquares}
||\hat{y}_{w,r(u;w)}||^2=||\hat{y}_w||^2+||\hat{y}_{r(u;w)}||^2.
\end{equation}
After rearranging (\ref{firstsumsquares}) and (\ref{secondsumsquares}) we divide to obtain
\begin{equation}
\label{ratio}
\frac{||\hat{y}_{r(u,w)}||^2}{||y-\hat{y}_w||^2}=\frac{||\hat{y}_{w,r(u,w)}||^2-||\hat{y}_w||^2}{||y||^2-||\hat{y}_w||^2}.
\end{equation}
Again by the orthogonality of $w$ and $r(u;w)$ we have $\hat{y}_{r(u;w)}=\hat{r(y;w)}_{r(u;w)}$ so the left hand side of (\ref{ratio}) can be rewritten as
\begin{equation}
\label{rephrased}
\frac{||\hat{y}_{r(u;w)}||^2}{||y-\hat{y}_w||^2}=\frac{||\hat{r(y;w)}_{r(u,w)}||^2}{||y-\hat{y}_w||^2}=R^2_{r(u;w),r(y;w)}.
\end{equation} 
On the other hand, since the span of $\{w,r(u;w)\}$ is equal to the span of $\{w,u\}$ we have $\hat{y}_{w,r(u;w)}=\hat{y}_{w,u}$ and the right hand side of (\ref{ratio}) can be rewritten as
\begin{equation}
\label{conclusion}
\frac{||\hat{y}_{w,r(u;w)}||^2-||\hat{y}_w||^2}{||y||^2-||\hat{y}_w||^2}=\frac{||\hat{y}_{w,u}||^2-||\hat{y}_w||^2}{||y||^2-||\hat{y}_w||^2}=\frac{(||\hat{y}_{w,u}||^2-||\hat{y}_w||^2)/||y||^2}{(||y||^2-||\hat{y}_w||^2)/||y||^2}=\frac{R^2_{w,u;y}-R^2_{w;y}}{1-R^2_{w;y}}.
\end{equation}
\end{proof}
\begin{remark}
    \label{partialR2}
    The quantities in (\ref{mainpropeq}) are referred to as coefficients of partial determination.
\end{remark}

Proposition \ref{mainprop} expresses $R^2_{r(u,w);r(y;w)}$ and $R^2_{r(u,w);r(x;w)}$ as linear functions of $R^2_{w,u;y}$ and $R^2_{w,u;x}$ given $R^2_{w;y}$ and $R^2_{w;x}$. The quantities $R^2_{w;y}$ and $R^2_{w;x}$ can be computed from the measured data $[y,x,w]$. Subject matter knowledge is used to bound the values of $R^2_{w,u;y}$ and $R^2_{w,u;x}$; see (\ref{upperbounds}). Due to the linearity of the relationships in (\ref{mainpropeq}), the bounds of (\ref{upperbounds}) can be inserted into the formulas of (\ref{mainpropeq}) to produce desired bounds on $R^2_{r(u,w);r(y;w)}$ and $R^2_{r(u,w);r(x;w)}$ as described in the following corollary.  
\begin{corollary}
\label{cor}
With $B_y$ and $B_x$ as defined in (\ref{upperbounds}), and $R^2_{r(u,w);r(y;w)}$, $R^2_{r(u,w);r(x;w)}$, $R^2_{w;y}$, and $R^2_{w;x}$ as defined previously, we bound $R^2_{r(u,w);r(y;w)}$ and $R^2_{r(u,w);r(x;w)}$ as follows:
\begin{equation}
\label{algobounds}
0\leq R^2_{r(u,w);r(y;w)} \leq \frac{B_y-R^2_{w;y}}{1-R^2_{w;y}}<1\textrm{~and~}0\leq R^2_{r(u,w);r(x;w)} \leq \frac{B_x-R^2_{w;x}}{1-R^2_{w;x}}<1.
\end{equation}
\end{corollary}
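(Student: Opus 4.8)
The plan is to combine the exact identities of Proposition~\ref{mainprop} with the monotonicity of the linear-fractional map appearing on their right-hand sides, together with the inequality chain of~(\ref{upperbounds}). I will describe the argument for the $y$-version; the $x$-version follows verbatim after exchanging $y$ for $x$ throughout.

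First I would record that, by~(\ref{upperbounds}), $R^2_{w;y}\le B_y<1$, so the denominator $1-R^2_{w;y}$ in~(\ref{mainpropeq}) is strictly positive. Hence the affine map $\varphi\colon t\mapsto (t-R^2_{w;y})/(1-R^2_{w;y})$ is strictly increasing, with $\varphi(R^2_{w;y})=0$ and $\varphi(1)=1$. Next I would use Proposition~\ref{mainprop} to write $R^2_{r(u,w);r(y;w)}=\varphi\bigl(R^2_{w,u;y}\bigr)$, and then apply the order-preserving $\varphi$ to the chain $R^2_{w;y}\le R^2_{w,u;y}\le B_y$ from~(\ref{upperbounds}), obtaining
\begin{equation*}
0=\varphi\bigl(R^2_{w;y}\bigr)\le \varphi\bigl(R^2_{w,u;y}\bigr)=R^2_{r(u,w);r(y;w)}\le \varphi(B_y)=\frac{B_y-R^2_{w;y}}{1-R^2_{w;y}}.
\end{equation*}
This is the two-sided bound; the strict inequality $(B_y-R^2_{w;y})/(1-R^2_{w;y})<1$ then follows by clearing the positive denominator, since it is equivalent to $B_y<1$. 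Repeating the argument with $x$ in place of $y$ gives the remaining inequalities in~(\ref{algobounds}).

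There is no real obstacle in this corollary: it reduces entirely to monotonicity of an affine function plus bookkeeping. The one thing that genuinely must be verified before invoking monotonicity is that the denominators $1-R^2_{w;y}$ and $1-R^2_{w;x}$ are strictly positive, which is exactly what the upper bounds $B_y<1$ and $B_x<1$ of~(\ref{upperbounds}) guarantee (together with $R^2_{w;y}\le B_y$ and $R^2_{w;x}\le B_x$, which furnish the lower endpoint $0$).
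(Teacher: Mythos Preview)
Your proposal is correct and matches the paper's approach: the paper does not give a formal proof of the corollary, but simply remarks in the paragraph preceding it that, due to the linearity of the relationships in~(\ref{mainpropeq}), the bounds of~(\ref{upperbounds}) can be inserted into~(\ref{mainpropeq}) to obtain~(\ref{algobounds}). Your argument is exactly this, with the monotonicity of the affine map $\varphi$ and the strictness of the final inequality spelled out explicitly.
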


\subsection{Summary}
\label{summary}
We now briefly summarize the introduced methodology for partial identification. From an observational study we measure $[y,x,w]$ and from that data we compute the coefficients of determination $R^2_{w;y}$ and $R^2_{w;x}$, and also the residual vectors $r(y;w)$ and $r(x;w)$. From those residual vectors we obtain the ratio of standard deviations $\sigma_{r(y;w)}/\sigma_{r(x;w)}$ and the correlation coefficient $\rho(r(x;w),r(y;w))$. We seek to partially identify the slope coefficient $\beta_{x.y|w,u}$ in the theoretical model of (\ref{themod}), and to do so we bound the theoretical coefficients of determination $R^2_{w,u;y}$ and $R^2_{w,u;x}$ with upper bounds $B_y$ and $B_x$ as described in (\ref{upperbounds}). We then apply the CI algorithm of Subsection \ref{CI} with $\sigma_{r(y;w)}/\sigma_{r(x;w)}$ and $\rho(r(x;w),r(y;w))$ as inputs along with the transformed bounds of Corollary \ref{cor}. The output of the CI algorithm consists of lower and upper bounds $L(B_x,B_y)$ and $U(B_x,B_y)$ on $\beta_{r(x;w).r(y;w)|r(u;w)}$ which by Proposition \ref{Danny} may be used to bound $\beta_{x.y|w,u}$. 

Overall, the methodology is an algorithm to assess the uncertainty in causal interpretation of the regression coefficient $\beta_{x.y|w}$ in the multiple regression model described in (\ref{pramod}) due to an unmeasured confounding set $u$. The overall algorithm takes measured $[y,x,w]$ data and user specified upper bounds $\{B_y,B_x\}$ satisfying $R^2_{w,u;y}\leq B_y$ and $R^2_{w,u;x}\leq B_x$ (see \ref{upperbounds}) as inputs, and then produces the interval $[L(B_x,B_y),U(B_x,B_y)]$ as an output, to partially identify the slope coefficient $\beta_{x.y|w,u}$ of the theoretical regression model (\ref{themod}) as follows: \begin{equation}\label{partialinterval}L(B_x,B_y)\leq \beta_{x.y|w,u}\leq U(B_x,B_y).\end{equation}
In short, the introduced methodology is an algorithm that carries out the following map:
\[(y,x,w;B_x,B_y)\mapsto [L(B_x,B_y),U(B_x,B_y)].\]
Our notation emphasizes the dependency of $L$ and $U$ on $(B_x,B_y)$ in case there is uncertainty regarding the user specified bounds $B_x$ and $B_y$. Our algorithm is available at \cite{Hughes25}, with extensions that treat $B_x$ and $B_y$ as variables as demonstrated in the next section.
\section{Example Applications}
\label{application}
Two example applications of our methodology are described here. The first example application addresses the question of whether smoking during pregnancy causes low birth weight. When describing the first example we emphasize how the concept of statistical significance may by itself be of limited value in support of causal inference, and we describe the first example from a point of view that should be familiar to analysts who work with causal graphs and admissible sets. The second example demonstrates how our introduced methodology supports causal inference from natural, natural experiments \citep{Rosen}. The weather is naturally random and partly unpredictable, and the second example application addresses the question of whether wind causes lower levels of air pollution. R programs that support the following analyses are available at \cite{Hughes25}.
\subsection{Smoking and birth weight}
\label{smoking}
Does smoking during pregnancy cause low birth weight?
We address that question with data from the (US) National Center for Health Statistics \cite{NCHSBirth}. The treatment variable $x$ is ``smoking'' measured as the average number of cigarettes smoked per day (cig/day) over the course of a pregnancy. We exclude those individuals who smoke more than twenty cigarettes per day, and we exclude those individuals who do not smoke at all. That exclusion does not substantially alter the fitted regression coefficients, but it does lead to better-fit regression models. The exclusion leaves us with a sample of moderate smokers that is still quite large, of size $n=298,334$. The outcome variable $y$ is the birth weight of each newborn baby. The birth weight is measured in grams, g, and as a variable it is denoted as ``birthweight''. The measured covariates in $w$ are the mother's age, father's age, mother's education level, and child's sex, along with racial and ethnic indicator variables for both the mother and father.

We regress birth weight onto smoking and the covariates of $w$ to obtain
$\hat{\beta}_{\textrm{smoking.birthweight}|w}=-6.09$ g/(cig/day). Due to the very large sample size the standard error of that fitted slope coefficient is $0.17$ g/(cig/day). Classical statistical analysis leads to a rejection of the null hypothesis that $\beta_{\textrm{smoking.birthweight}|w}=0$ with a very small p-value ($p<10^{-15}$). But that inference is not entirely informative nor directly relevant to our causal question of whether an intervention to lower smoking would cause an increase in birth weight. It may be convenient to assume that $w$ is an admissible set of covariates, but we make no such non-confounding assumption here. We can not infer from $\hat{\beta}_{\textrm{smoking.birthweight}|w}$ and its relatively small standard error that each additional cigarette smoked per day causes an expected loss of about $6$ grams of newborn weight. To more clearly address the causal question we ignore sampling error and focus instead on error due to unmeasured attributes. We have $\beta_{\textrm{smoking.birthweight}|w}=-6.09$ g/(cig/day) and by Proposition \ref{Danny} we have also that  $\beta_{r(\textrm{smoking};w).r(\textrm{birthweight};w)}=-6.09$ g/(cig/day). That adjusted slope is visible in Figure \ref{residuals1}. Conditional on $w$ there is a negative relationship between smoking and birth weight.
\begin{figure}[ht]
    \centering
    \includegraphics[width=.6\linewidth]{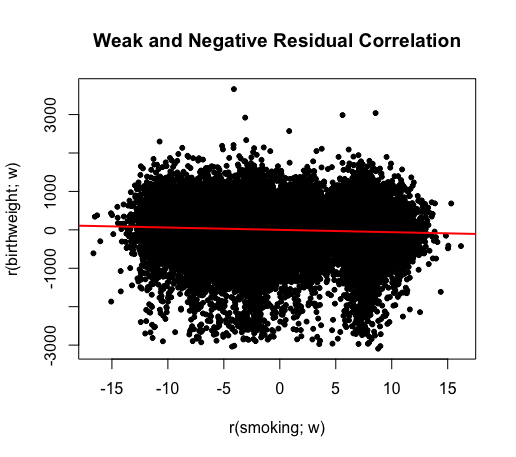}
    \caption{A scatter plot of $r(\textrm{smoking};w)$ vs $r(\textrm{birthweight};w)$ along with the least-squares regression line with slope $\beta_{\textrm{smoking.birthweight}|w}=\beta_{r(\textrm{smoking};w).r(\textrm{birthweight};w)}=-6.09$ g/(cig/day).}
    \label{residuals1}
\end{figure}

The broad question is whether $\beta_{r(\textrm{smoking};w).r(\textrm{birthweight};w)}=-6.09$ g/(cig/day) admits a causal interpretation. A narrow, more technical question, is whether any set of unmeasured covariates $u$ that makes $\{w,u\}$ admissible in the sense of \citep{Pearl2009} could also make $\beta_{\textrm{smoking.birthweight}|w,u}$ far from $\beta_{\textrm{smoking.birthweight}|w}=\beta_{r(\textrm{smoking};w).r(\textrm{birthweight};w)}=-6.09$ g/(cig/day). To answer that narrow, technical question, we consider variable upper bounds $B_{\textrm{smoking}}$ and $B_{\textrm{birthweight}}$ that satisfy $R^2_{w,u;\textrm{smoking}}\leq B_{\textrm{smoking}}$ and $R^2_{w,u;\textrm{birthweight}}\leq B_{\textrm{birthweight}}$. From the residuals pictured in Figure \ref{residuals1} we compute the ratio of standard deviations $\sigma_{r(\textrm{birthweight};w)}/\sigma_{r(\textrm{smoking};w)}= 92.75$ and the correlation coefficient \[\rho(r(\textrm{smoking};w),r(\textrm{birthweight};w))=-0.07\] and then fix those values throughout the remainder of the analysis in this Subsection \ref{smoking}. We have computed $R^2_{w;\textrm{smoking}}=0.05$ and $R^2_{w;\textrm{birthweight}}=0.03$ from the measured data. For each feasible (see (\ref{upperbounds})) pair $(B_{\textrm{smoking}},B_{\textrm{birthweight}})$ satisfying \[R^2_{w;\textrm{smoking}}=0.05\leq B_{\textrm{smoking}}<1\textrm{~and~}R^2_{w;\textrm{birthweight}}=0.03\leq B_{\textrm{birthweight}}<1\] we utilize the overall algorithm of Subsection \ref{summary} to compute \[L(B_{\textrm{smoking}}, B_\textrm{birthweight})\textrm{~and~}U(B_{\textrm{smoking}}, B_\textrm{birthweight})\] of (\ref{partialinterval}) and plot the results in Figure \ref{3dplot1}. The contours of $U(B_{\textrm{smoking}}, B_\textrm{birthweight})$ are shown in Figure \ref{contour1}. Figures \ref{3dplot1} and \ref{contour1} do not contain strong evidence in support of the claim that smoking during pregnancy causes low birth weight.

\begin{figure}[ht]
    \centering
    \includegraphics[width=.65\linewidth]{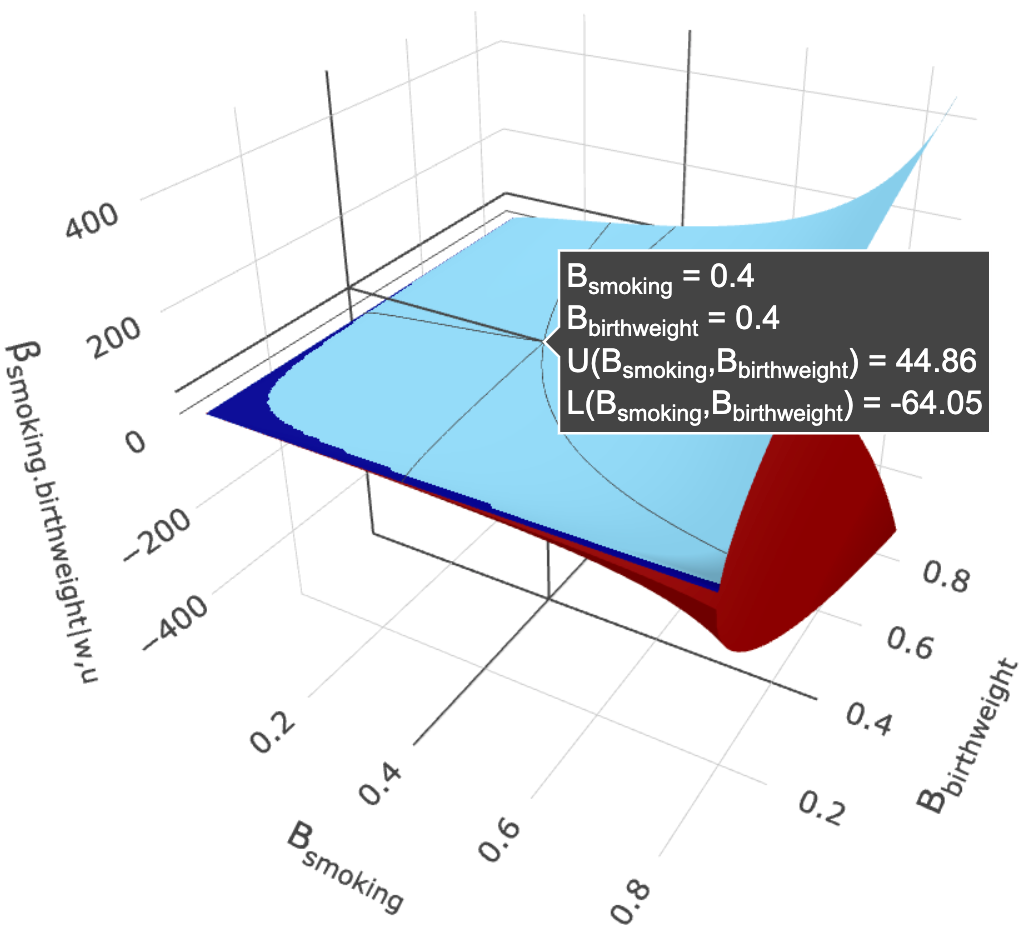}
    \caption{A 3D plot showing $L(B_{\textrm{smoking}}, B_\textrm{birthweight})$ in red, $U(B_{\textrm{smoking}}, B_\textrm{birthweight})$ in blue, and the region where $U(B_{\textrm{smoking}}, B_\textrm{birthweight})< 0$ in a darker shade of blue.}
    \label{3dplot1}
\end{figure}
\begin{figure}[ht]
    \centering
    \includegraphics[width=.6\linewidth]{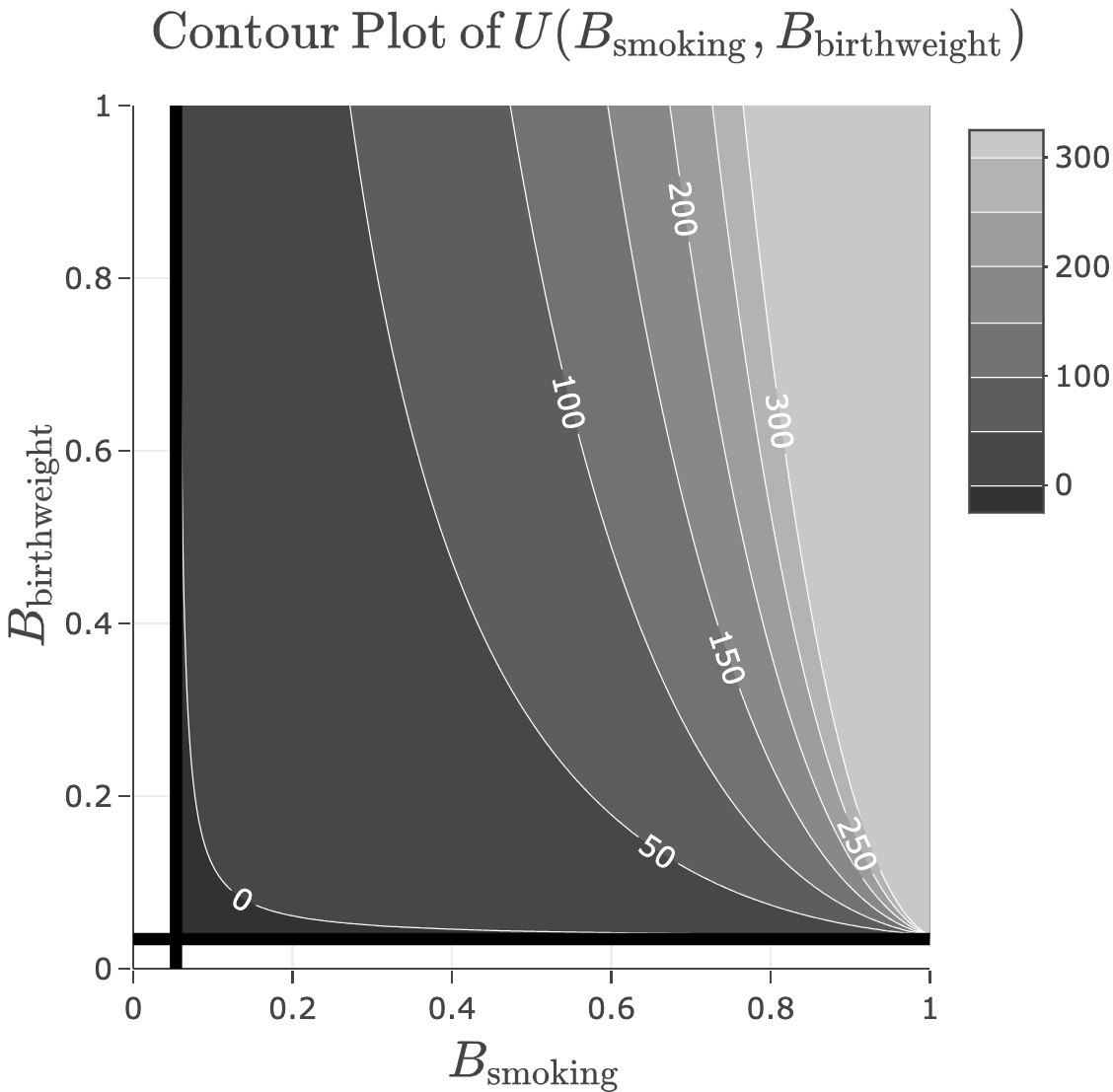}
    \caption{A contour plot displaying contour lines of $U(B_{\textrm{smoking}} , B_\textrm{birthweight})$.}
    \label{contour1}
\end{figure}

While the 3-d plot in Figure \ref{3dplot1} is made from information that is sufficient for a complete and thorough sensitivity analysis, subject matter experts may prefer to consider plausible and feasible values of $(B_{\textrm{smoking}},B_{\textrm{birthweight}})$, one pair at a time. Suppose for instance that subject matter experts specify $(B_{\textrm{smoking}},B_{\textrm{birthweight}})=(0.4,0.4)$. In that case the overall algorithm produces (in units of g/(cig/day))
\begin{equation}\label{app1result}L(0.4,0.4)=-64.05\le\beta_{\textrm{smoking}.\textrm{birthweight}|w,u}\le44.86=U(0.4,0.4).\end{equation}
We see in (\ref{app1result}) that the known randomness is insufficient to identify the sign or direction of $\beta_{\textrm{smoking}.\textrm{birthweight}|w,u}$. By insufficient randomness we mean that with bounds $B_{\textrm{smoking}}=0.4$ and $B_{\textrm{birthweight}}=0.4$ and coefficients of determination $(R^2_{w;\textrm{smoking}},R^2_{w;\textrm{birthweight}})=(0.05,0.03)$ as computed from $[y=\textrm{birthweight},x=\textrm{smoking},w]$, the coefficients of partial determination in (\ref{mainpropeq}) could be relatively high, as seen in the following: 
\begin{equation}\label{highy}R^2_{r(u,w);r(\textrm{birthweight};w)}=\frac{R^2_{w,u;\textrm{birthweight}}-R^2_{w;\textrm{birthweight}}}{1-R^2_{w;\textrm{birthweight}}}\le\frac{B_{\textrm{birthweight}}-R^2_{w;\textrm{birthweight}}}{1-R^2_{w;\textrm{birthweight}}}=0.38\end{equation}
and 
\begin{equation}\label{highx}R^2_{r(u,w);r(\textrm{smoking};w)}=\frac{R^2_{w,u;\textrm{smoking}}-R^2_{w;\textrm{smoking}}}{1-R^2_{w;\textrm{smoking}}}\le\frac{B_{\textrm{smoking}}-R^2_{w;\textrm{smoking}}}{1-R^2_{w;\textrm{smoking}}}=0.37.\end{equation}
With the possibility of that much residual and unaccounted for determinism (c.f. \citep{PearlDM}, \citep{Knaeble2023}) comes enough space to construct theoretical $u$ vectors to make the interval of (\ref{app1result}) fairly wide. Due to the large sample size we have a narrow confidence interval but a very wide confounding interval in this case. Keep in mind, however, that the confounding interval in (\ref{app1result}) depends on the specified bounds $(B_{\textrm{smoking}},B_{\textrm{birthweight}})=(0.4,0.4)$.

Instead of specifying values for $B_{\textrm{smoking}}$ and $B_{\textrm{birthweight}}$ a subject matter specialist can specify a threshold $\beta^\star_{\textrm{smoking}.\textrm{birthweight}}$ of practical significance for $\beta_{\textrm{smoking}.\textrm{birthweight}|w,u}$. Suppose for instance that a causal effect below $\beta^\star_{\textrm{smoking}.\textrm{birthweight}}=-1$ g/(cig/day) is practically significant. The information behind Figure \ref{3dplot1} and Figure \ref{contour1} then produces the contour of feasible $(B_{\textrm{smoking}},B_{\textrm{birthweight}})$ values satisfying \begin{equation}\label{contoura}U(B_{\textrm{smoking}},B_{\textrm{birthweight}})=\beta^\star_{\textrm{smoking}.\textrm{birthweight}}=-1 \textrm{~g/(cig/day)}.\end{equation} 
That contour bounds the gray region of $(R^2_{w,u;\textrm{smoking}},R^2_{w,u;\textrm{birthweight}})$ values pictured in Figure \ref{region1}. 
We may ask whether there exists a set of unmeasured covariates $u$ such that $\{w,u\}$ is admissible and $(R^2_{w,u;\textrm{smoking}},R^2_{w,u;\textrm{birthweight}})$ is within the gray region of Figure \ref{region1}. Because the gray region of Figure \ref{region1} is small and confined to where coefficients of determination are smaller, it seems implausible that we could find such a $u$ where $(R^2_{w,u;\textrm{smoking}},R^2_{w,u;\textrm{birthweight}})$ is in the gray region and $\{w,u\}$ is admissible, which means that the methodology is unlikely to demonstrate that smoking during pregnancy causes low birth weight. What blocks causal inference here is that the residual correlation of Figure \ref{residuals1} is relatively weak, while for any $u$ that makes $\{w,u\}$ admissible the coefficients of partial determination $R^2_{r(u,w);r(\textrm{smoking};w)}$ and $R^2_{r(u,w);r(\textrm{birthweight};w)}$ are plausibly of larger magnitude. For causal inference we want the opposite. What facilitates causal inference is a $w$ that produces strong residual correlation between $r(x;w)$ and $r(y;w)$ and then a $u$ where $\{w,u\}$ is admissible and the coefficients of partial determination $R^2_{r(u,w);r(x;w)}$ and $R^2_{r(u,w);r(y;w)}$ are smaller. That situation can arise in natural experiments, as demonstrated in the next example application. 
\begin{figure}[ht]
    \centering
    \includegraphics[width=.5\linewidth]{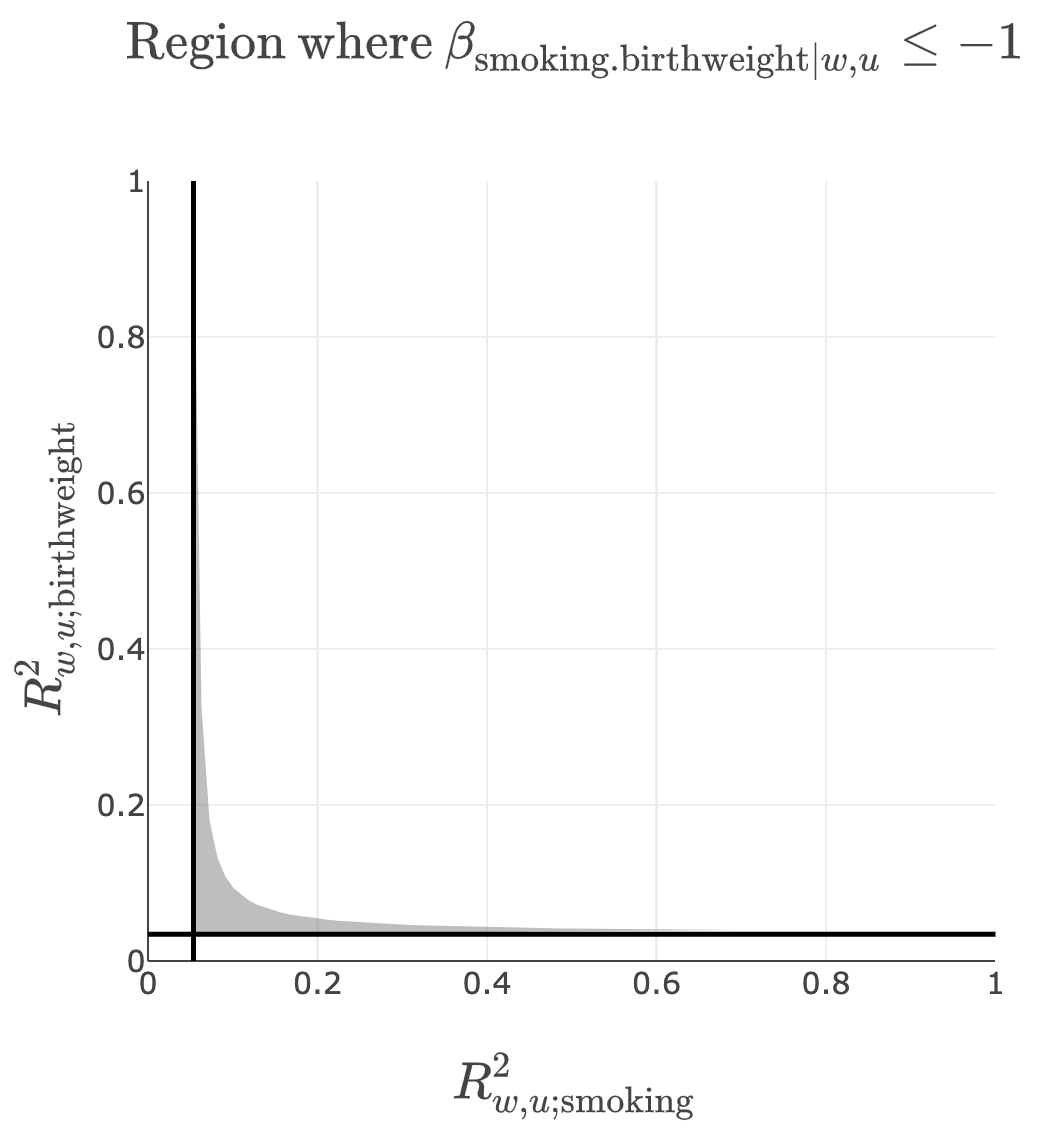}
    \caption{The (small) shaded region consists of $(R^2_{w,u;\textrm{smoking}},R^2_{w,u;\textrm{birthweight}})$ values that guarantee the practical significance of $\beta_{\textrm{smoking}.\textrm{birthweight}|w,u}$.}
    \label{region1}
\end{figure}
\newpage
\subsection{Wind and air quality} 
\label{wind}
Here we apply our methodology in an example context that involves natural randomness of meteorological processes. Our goal is to clarify why natural, natural experiments \citep{Rosen} lead to compelling causal inferences from observational data. The specific, causal question that we are asking is the following: what is the causal effect of wind on air quality in the Salt Lake City metropolitan area of Utah, USA? In the winter months of December and January the residents of Salt Lake City are exposed to winter inversions that trap air pollutants within an urban area at the base of the Wasatch mountains; See Figure \ref{Inv}. When an inversion is ongoing it can sometimes be difficult to forecast which incoming cold fronts will be strong enough to blow away the pollutants and improve the air quality.
\begin{figure}[ht]
\centering
\includegraphics[width=.6\linewidth]{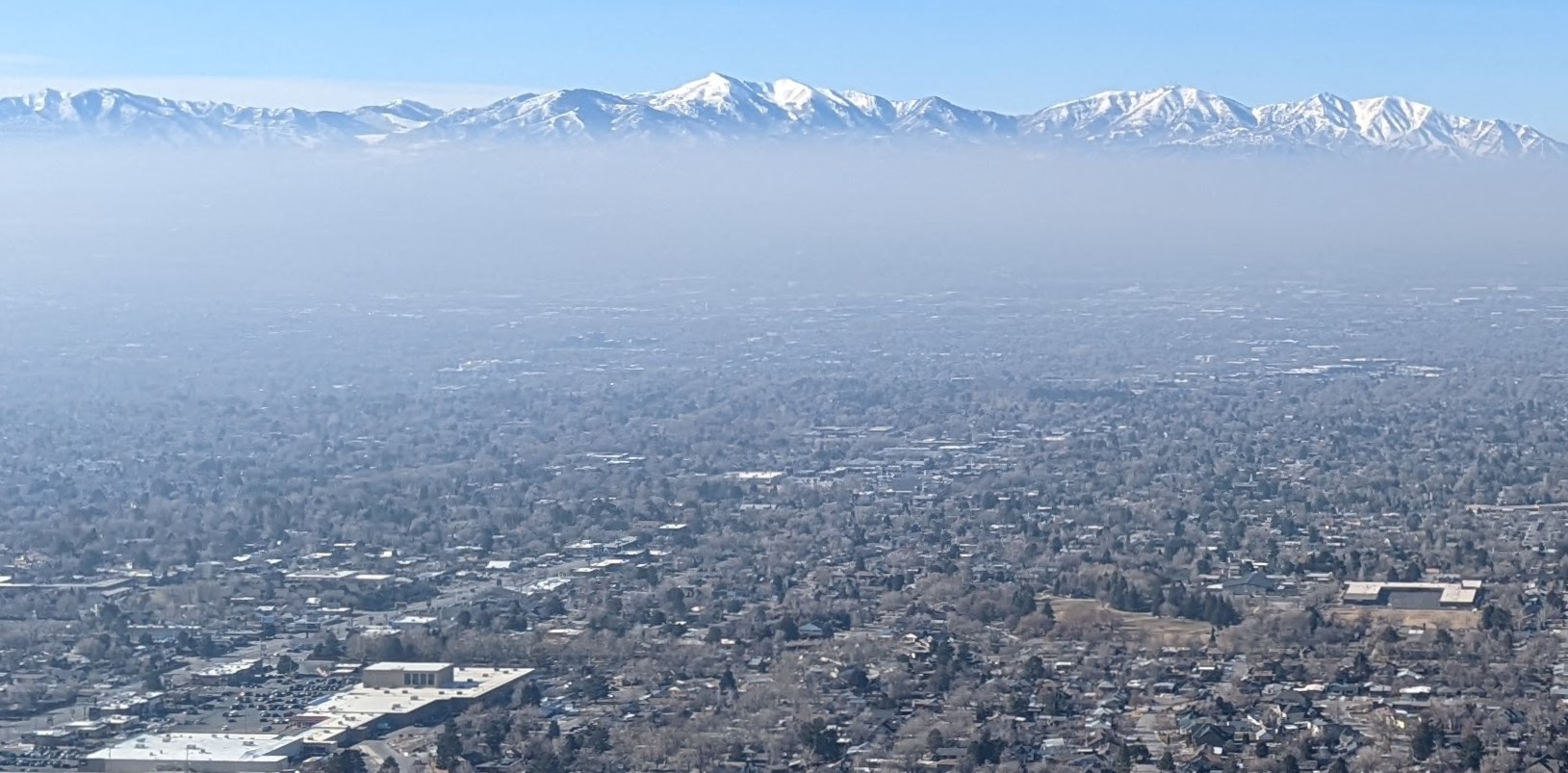}
\caption{A picture taken in January of 2022 showing a typical winter inversion causing poor air quality in Salt Lake City, Utah (USA) \citep{Williams2022}.}
\label{Inv}
\end{figure}

It is understood that the air pollution in Salt Lake City is caused by human activity, and much data has been collected. We have accessed data from the Hawthorn station in Salt Lake City \cite{EPA_AQS}. We are restricting our attention to measurements made during the winter months of December or January between the years of 2013 and 2024, inclusive. The treatment variable $x$ is average wind speed on Wednesdays, denoted with ``wind'' and measured in knots (kts). A common measure of the severity of particulate air pollution is known as PM2.5 and it is measured in micrograms per cubic meter (mcg/m3). We have obtained only those PM2.5 measurements recorded just after the conclusion of each Wednesday during the first hour after midnight of each Thursday. Our outcome variable $y$ is the natural logarithm of those PM2.5 values, denoted with ``$\ln(\textrm{PM2.5})$''. We introduce a unit for the $\ln(\textrm{PM2.5})$ values called the log concentration (lc) so that for any measured value $\textrm{PM2.5}>0$ we have $\ln(\textrm{PM2.5})\textrm{lc}=\textrm{PM2.5mcg/m3}.$ The sole covariate in $w$ is the minimum over each Tuesday's measurements of air pressure measured in millibars and denoted with ``pressure''. Over a range of typical wind values the correlation between wind and $\ln(\textrm{PM2.5})$ is negative, but PM2.5 can spike on extremely windy days due to blowing dust, and for that reason we excluded $7$ outliers defined statistically as those wind values exceeding the threshold of $Q3+1.5\times IQR$, where $Q3$ is the third quartile of wind and $IQR$ is the interquartile range of wind. After removing those outliers there were $n=77$ weeks of data to analyze. We intentionally accessed data at (regular) weekly time intervals to obtain approximately independent observations of each variable over time to better justify our use of the principle of least-squares. As done in the previous section, we ignore sampling error and focus instead on error due to unmeasured attributes.

We regress $\ln(\textrm{PM2.5})$ onto wind and pressure to obtain
$\beta_{\textrm{wind}.\ln(\textrm{PM2.5})|\textrm{pressure}}=-0.78$ lc/kt. By Proposition \ref{Danny} we have  $\beta_{\textrm{wind}.\ln(\textrm{PM2.5})|\textrm{pressure}}=\beta_{r(\textrm{wind};\textrm{pressure}).r(\ln(\textrm{PM2.5});\textrm{pressure})}=-0.78$ lc/kt. That adjusted slope is visible in Figure \ref{residuals2}. 
\begin{figure}[ht]
    \centering
    \includegraphics[width=.6\linewidth]{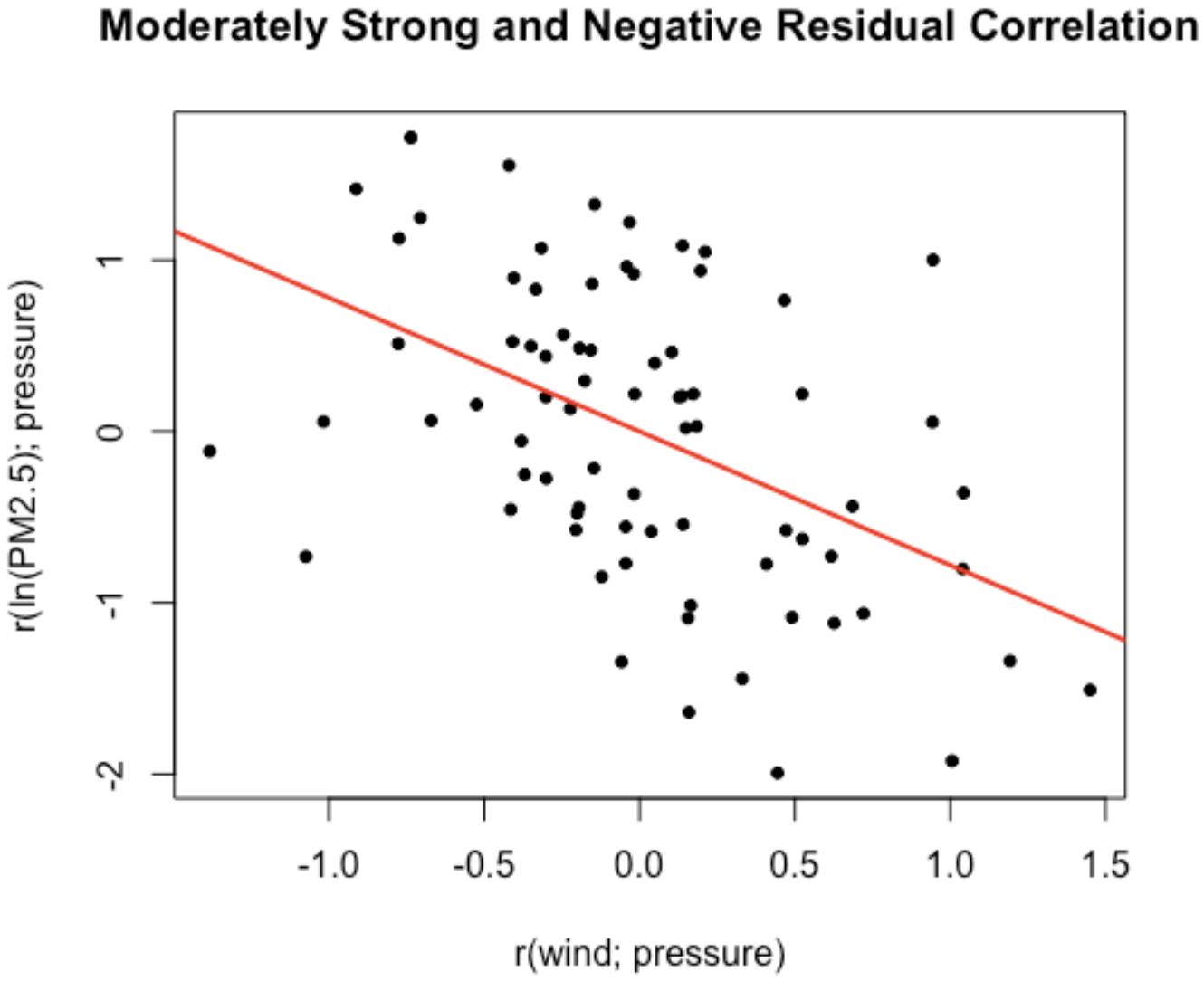}
    \caption{A scatter plot of $r(\textrm{wind};\textrm{pressure})$ vs $r(\ln(\textrm{PM2.5}); \textrm{pressure})$ along with the least-squares regression line with slope $\beta_{\textrm{wind}.\ln(\textrm{PM2.5})|\textrm{pressure}}=\beta_{r(\textrm{wind};\textrm{pressure}).r(\ln(\textrm{PM2.5});\textrm{pressure})}=-0.78$ lc/kt.}
    \label{residuals2}
\end{figure}
Conditional on pressure there is a negative relationship between wind and $\ln(\textrm{PM2.5})$. Our causal question is whether $$\beta_{r(\textrm{wind};\textrm{pressure}).r(\ln(\textrm{PM2.5});\textrm{pressure})}=-0.78\textrm{~lc/kt}$$ admits a causal interpretation. 
From the residuals pictured in Figure \ref{residuals1} we compute the ratio of standard deviations $\sigma_{r(\ln(\textrm{PM2.5});\textrm{pressure})}/\sigma_{r(\textrm{wind};\textrm{pressure})}= 1.62$ and the correlation coefficient $$\rho(r(\textrm{wind};\textrm{pressure}),r(\ln(\textrm{PM2.5});\textrm{pressure}))=-0.48$$ and then fix those values throughout the remainder of the analysis in this Subsection \ref{wind}. We have computed $R^2_{\textrm{pressure};\textrm{wind}}=0.14$ and $R^2_{\textrm{pressure};\ln(\textrm{PM2.5})}=0.28$ from the measured data. For each feasible (see (\ref{upperbounds})) pair $(B_{\textrm{wind}},B_{\ln(\textrm{PM2.5})})$ satisfying
\[R^2_{\textrm{pressure};\textrm{wind}}=0.14\leq B_{\textrm{wind}}<1\textrm{~and~}R^2_{\textrm{pressure};\ln(\textrm{PM2.5})}=0.28\leq B_{\ln(\textrm{PM2.5})} <1\] we utilize the overall algorithm of Subsection \ref{summary} to compute \[L(B_{\textrm{wind}}, B_{\ln(\textrm{PM2.5})})\textrm{~and~}U(B_{\textrm{wind}}, B_{\ln(\textrm{PM2.5})})\] of (\ref{partialinterval}) and plot the results in Figure \ref{3dplot2}. The contours of $U(B_{\textrm{wind}}, B_{\ln(\textrm{PM2.5})})$ are shown in Figure \ref{contour2}. 
\begin{figure}[b]
    \centering
    \includegraphics[width=.65\linewidth]{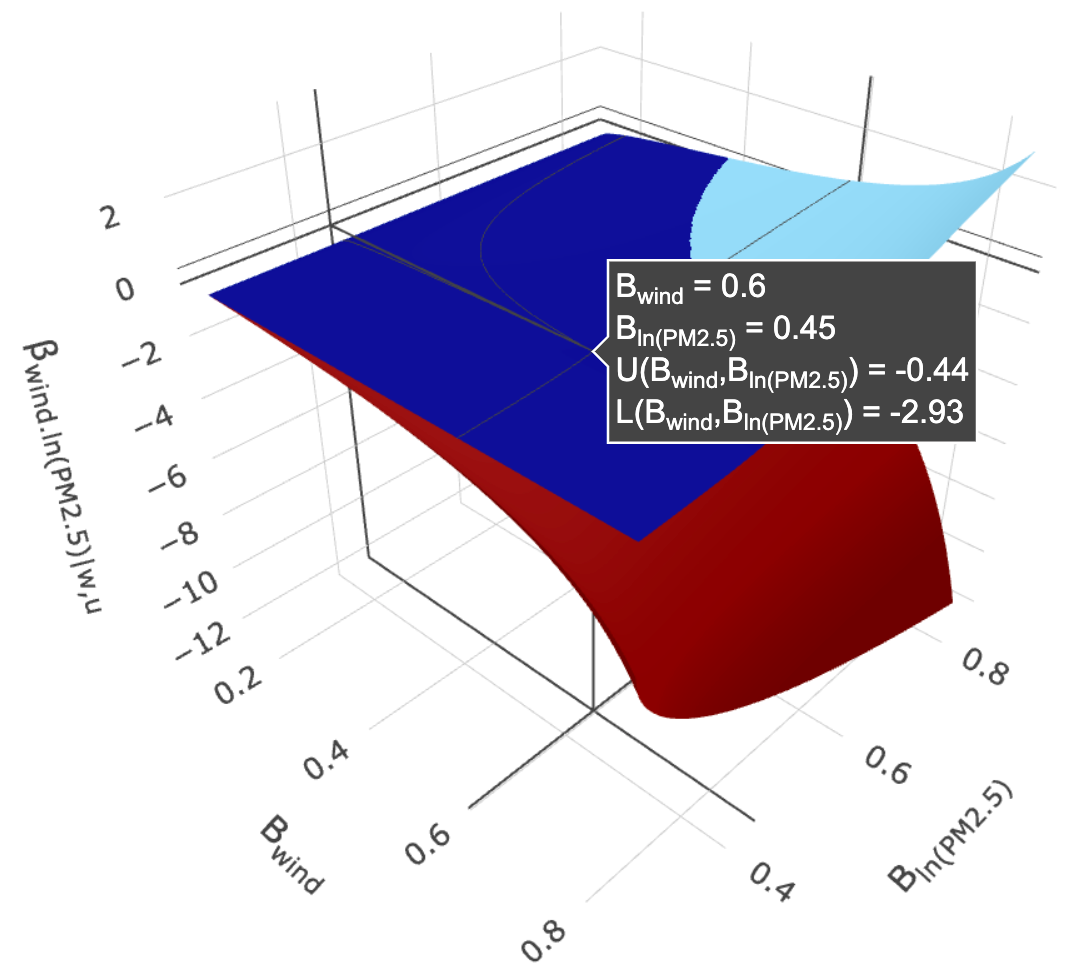}
    \caption{A 3D plot showing $L(B_{\textrm{wind}}, B_{\ln(\textrm{PM2.5})})$ in red, $U(B_{\textrm{wind}}, B_{\ln(\textrm{PM2.5})})$ in blue, and the region where $U(B_{\textrm{wind}}, B_{\ln(\textrm{PM2.5})})<0$ in a darker shade of blue.}
    \label{3dplot2}
\end{figure}
\begin{figure}[t]
    \centering
    \includegraphics[width=.6\linewidth]{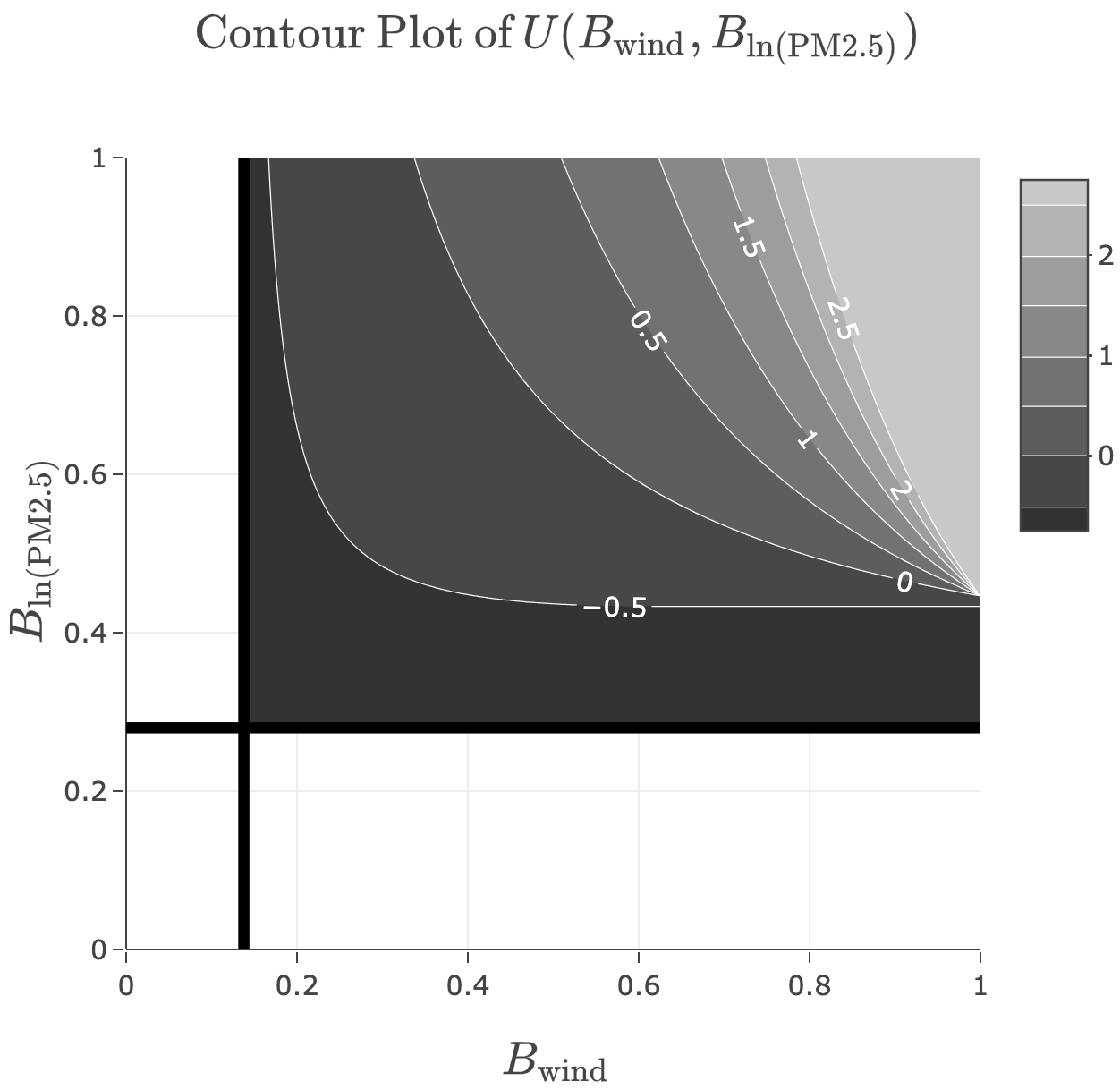}
    \caption{A contour plot displaying contour lines of $U(B_{\textrm{wind}} , B_{\ln(\textrm{PM2.5})})$.}
    \label{contour2}
\end{figure}

While the 3-d plot in Figure \ref{3dplot2} is made from information that is sufficient for a complete and thorough sensitivity analysis, subject matter experts may prefer to consider plausible and feasible pairs of $(B_{\textrm{wind}},B_{\ln(\textrm{PM2.5})})$ one at a time. Suppose for instance that subject matter experts specify $(B_{\textrm{wind}},B_{\ln(\textrm{PM2.5})})=(0.60,0.45)$. In that case the overall algorithm produces (in lc/kt units)
\begin{equation}\label{app1result2}L(0.60,0.45)=-2.93\le\beta_{\textrm{wind}.\ln(\textrm{PM2.5})|\textrm{pressure},u} \le -0.44=U(0.60,0.45).\end{equation}
From (\ref{app1result2}) we can identify the sign or direction of $\beta_{\textrm{wind}.\ln(\textrm{PM2.5})|\textrm{pressure},u}$, for any $u$ satisfying $R^2_{\textrm{pressure},u;\textrm{wind}}\leq 0.60$ and $R^2_{\textrm{pressure},u;\ln(\textrm{PM2.5})}\leq 0.45$. 

It could be argued, since meteorological processes are more random, that the bounds can be lowered to $(B_{\textrm{wind}},B_{\ln(\textrm{PM2.5})})=(0.25,0.40)$. Then the overall algorithm produces (in lc/kt units)
\begin{equation}\label{app1result3}
L(0.25,0.40)=
-1.17\le\beta_{\textrm{wind}.\ln(\textrm{PM2.5})|\textrm{pressure},u}\le -0.62
=
U(0.25,0.40).\end{equation} In (\ref{app1result3}) we see much more precise partial identification made possible by lower $(B_{\textrm{wind}},B_{\ln(\textrm{PM2.5})})$ values. The possibility of lower $(B_{\textrm{wind}},B_{\ln(\textrm{PM2.5})})$ values arises due to the natural randomness in meteorological processes. 

Instead of specifying values for $B_{\textrm{wind}}$ and $B_{\ln(\textrm{PM2.5})}$ a subject matter specialist can specify a threshold $\beta^\star_{\textrm{wind}.\ln(\textrm{PM2.5})}$ of practical significance for $\beta_{\textrm{wind}.\ln(\textrm{PM2.5})|\textrm{pressure},u}$. Suppose for instance that a causal effect below $\beta^\star_{\textrm{wind}.\ln(\textrm{PM2.5})}=-0.1$ lc/kt is practically significant. The information behind Figure \ref{3dplot2} and Figure \ref{contour2} then produces the contour of feasible $(B_{\textrm{wind}},B_{\ln(\textrm{PM2.5})})$ values satisfying \begin{equation}\label{contourb}U(B_{\textrm{wind}},B_{\ln(\textrm{PM2.5})})=\beta^\star_{\textrm{wind}.\ln(\textrm{PM2.5})}=-0.1 \textrm{~lc/kt}.\end{equation} 
That contour bounds the gray region of $(R^2_{\textrm{pressure},u;\textrm{wind}},R^2_{\textrm{pressure},u;\ln(\textrm{PM2.5})})$ values pictured in Figure \ref{region2}. The gray region of Figure \ref{region2} is relatively large, reflecting the moderately strong correlation in Figure \ref{residuals2}. 
Also, due to the natural randomness of meteorological phenomena there could be a $u$ that makes $\{w,u\}$ admissible and places $R^2_{\textrm{pressure},u;\textrm{wind}}$ and $R^2_{\textrm{pressure},u;\ln(\textrm{PM2.5})}$ within the gray region of Figure \ref{region2}. There is thus relatively strong evidence to support inference of a practically significant causal effect of wind on  $\ln(\textrm{PM2.5})$.
\begin{figure}[ht]
    \centering
    \includegraphics[width=.5\linewidth]{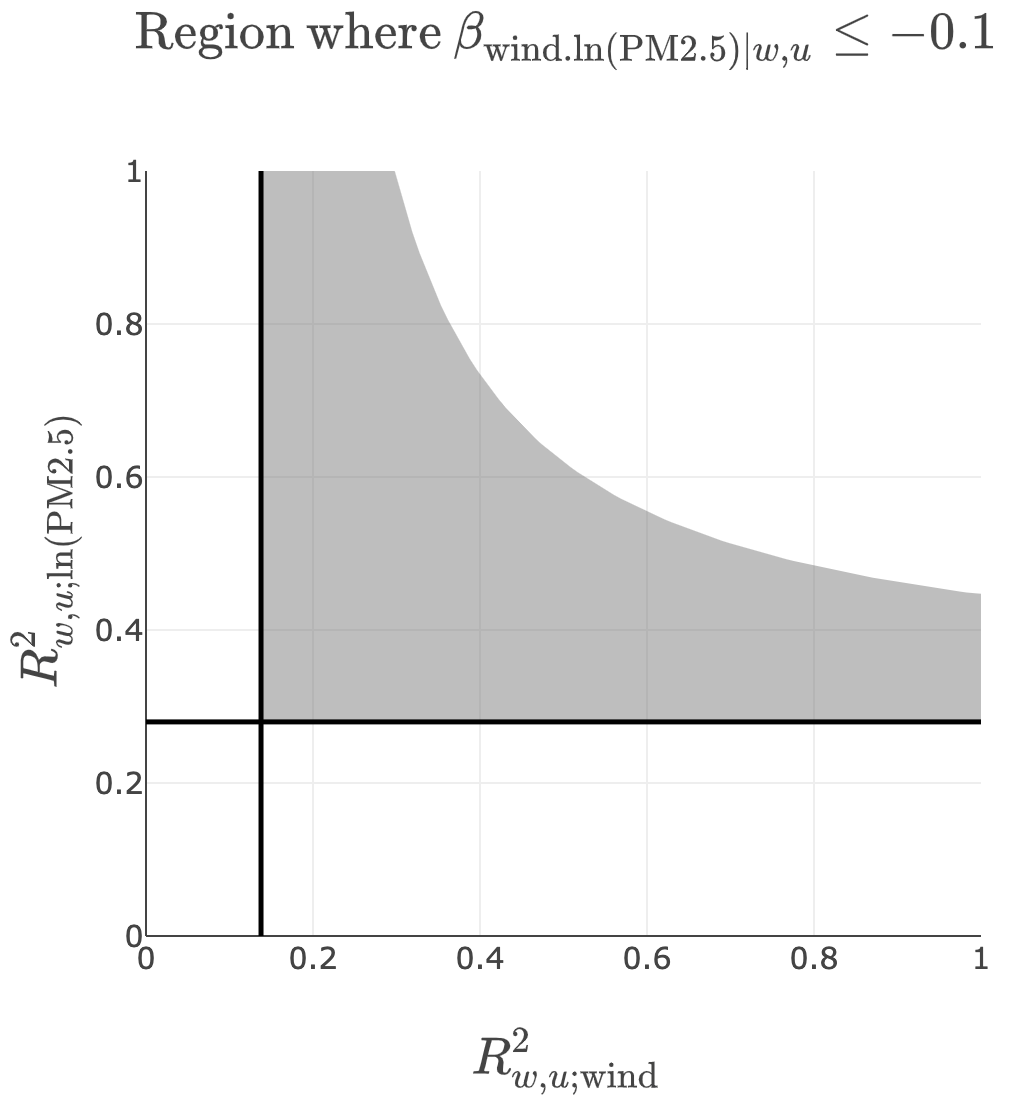}
    \caption{The (large) shaded region consists of $(R^2_{\textrm{pressure},u;\textrm{wind}},R^2_{\textrm{pressure},u;\ln(\textrm{PM2.5})})$ values that guarantee the practical significance of $\beta_{\textrm{wind}.\ln(\textrm{PM2.5})|\textrm{pressure},u}$}
    \label{region2}
\end{figure}

\newpage
\section{Discussion}
\label{discussion}
In this paper we have shown how the coefficients of determination $R^2_{w,u;x}$ and $R^2_{w,u;y}$ can be used as sensitivity parameters within a multiple regression context to analyze omitted variables bias. Uncertainty in causal interpretation of the slope coefficient $\beta_{x.y|w}$ (see (\ref{pramod})), due to unmeasured confounding by a set of confounding variables $u$, can be assessed with the methodology described in Section \ref{methods}. The example applications of Section \ref{application} demonstrate how our methodology can partially identify the slope coefficient $\beta_{x.y|w,u}$ provided there are reliable upper bounds $B_x$ and $B_y$ (see \ref{upperbounds}) on $R^2_{w,u;x}$ and $R^2_{w,u;y}$. Those same applications demonstrate also how to conduct sensitivity analysis with variable bounds $B_x$ and $B_y$. We have assumed the existence of an unmeasured set of covariates $u$ so that the covariate set $\{w,u\}$ is admissible. We have defined the covariates in $w$ and $u$ for each individual at times prior to treatment, preferably prior to a naturally random process that gives rise to the treatment. Awareness of that randomness leads to upper bounds $B_x$ and $B_y$ that are closer to $R^2_{w,u;x}$ and $R^2_{w,u;y}$ and consequently, via our methodology, less uncertainty in causal interpretation of $\beta_{x.y|w}$. With perfect bounds $R^2_{w,u;x}=B_x$ and $R^2_{w,u;y}=B_y$ there is no room for residual determinism, only residual randomness, and our partial identification methodology becomes a methodology to point identify $\beta_{x.y|w,u}$ with $\beta_{x.y|w}$.

It may not be appropriate, however, to accumulate randomness by pushing back the times of definition for the covariates in $\{w,u\}$ way back to long before the individual treatment times. Doing so may increase the likelihood of a Stable Unit Treatment Value Assumption (SUTVA) violation; see \citep{Imbens2015}. Something like that must be done though if we are to harness the randomness of a natural, natural experiment. Our partial identification of $\beta_{x.y|w,u}$ with the methodology of this paper relies on SUTVA in the following way. We explain under the assumption that $B_x$ is a tight upper bound for $R^2_{w,u;x}$. The explanation with respect to $y$ is similar. Since $B_x-R^2_{w,u;x}$ is positive but nearly zero, as long as $1-B_x$ is not too small then $\frac{1-B_x}{1-R^2_{w,u;x}}$ is large (near one) meaning that nearly $100\%$ of the residual variance in $x$ can not be predicted from $(w,u)$ at their times of definition. That means that residual variance in $x$ is largely due to chance events that occur after the times of definition of $(w,u)$. Those events may be instruments or prods in the sense of \cite{Pim}, and some sort of exclusion restriction \citep{Imbens2015} is required to warrant causal inference. The possibly random events that arise after the definition times of $(w,u)$ for each individual could themselves be confounding the observed correlation between residual $x$ and residual $y$ conditional on $(w,u)$. One way to address such a concern is to consider the potential outcome $y_i(x)$, meaning the $y$ value for individual $i$ that would occur if treatment for individual $i$ were contrary to fact set to $x$. But there is a subtlety relating the times of definition for those counterfactual potential outcomes. That subtlety is hidden over the short time frames associated with coin flips in a random experiment, but that subtlety becomes apparent over the longer time frames of naturally random processes like the meteorological processes of Section \ref{wind}. If each individual's potential outcomes are stable from the times of definition of $(w,u)$ until the moment $x$ is measured, inclusive of that moment, then our partial identification of $\beta_{x.y|w,u}$ would be more likely to admit a causal interpretation. This is the sense in which we are relying on SUTVA. 

Many methodologies for causal inference rely on SUTVA \citep{Imbens2015}, as we rely on SUTVA here, but we warn against causal interpretation of $\beta_{x.y|w,u}$ without first considering the validity of SUTVA. Likewise, we assume the existence of an unmeasured set of covariates $u$ so that $\{w,u\}$ is admissible in the sense of \citep{Pearl2009}, and that assumption may not be valid. Perhaps our most limiting assumption with regards to causal inference is an implicit assumption of homogeneous causal effects made by our choice of regression models. Regression based techniques are known to be less reliable for causal inference \citep{Rose10}, and there are methodologies similar to what we have described here that allow for heterogeneous effects and do not utilize regression models \citep{Knaeble2023, Knaeble2024, 10.48550/arxiv.2407.19057}. Nevertheless, there is an elegance to least-squares \citep{KD, Knaeble2020}, and regression is still useful \citep{Angrist2009}. To limit our reliance on SUTVA we have presented our methodology as a technique for partial identification of $\beta_{x.y|w,u}$ without requiring a causal interpretation; that is our methodology may be used in a regression context, assuming the principle of least-squares, to assess what might happen to $\beta_{x.y|w}$ if the additional covariates of $u$ are added to the model, and that technical assessment is valid regardless of how $\beta_{x.y|w}$ and $\beta_{x.y|w,u}$ are interpreted. In other words, one may interpret our algorithm as an interesting mathematical result regarding multiple least-squares projections of various vectors onto various hyperplanes. The basic CI algorithm described in Subsection \ref{CI} and first published in \citep{Knaeble2020} supports, for instance, efficient search for maximum and minimum slope coefficients over a large space of models \citep{KBB}. In \citep{KBB} the space of models is too large for a brute force search over each possible model that could be fit to the measured data, while here our methodology is concerned not so much with measured data but primarily with understanding the sensitivity of conclusions to unmeasured confounding.

There are a variety of methods for sensitivity analysis of unmeasured confounding in a regression context. We restate some results here in our notation. Economists are familiar \citep{Angrist2009} with the following formula for omitted variable bias: \begin{equation}\label{simple}\beta_{x.y}-\beta_{x.y|u_1}=\beta_{u_1.y|x}\beta_{x.u_1}.\end{equation} Statisticians have derived \citep{HHH} the formula \begin{equation}\label{Heq}\hat{\beta}_{x.y|w}-\hat{\beta}_{x.y|w,u_1}=\textrm{SE}(\hat{\beta}_{x.y|w})t(u_1)\rho(r(y;x,w),r(u_1;x,w)),\end{equation} where the population is finite, SE stands for the standard error, and $t(u_1)$ is student's t statistic associated with $u_1$ when $x$ is regressed onto $[w,u_1]$. That formula generalizes to allow multivariate $u$ in place of $u_1$ as described in \citep[Section 4]{HHH}. Another economics paper \citep{Oster2017} introduces the formula \begin{equation}\label{Osterresult}\beta_{x.y|w,u_1}=\beta_{x.y|w}-(\beta_{x.y}-\beta_{x.y|w})\frac{R^2_{x,w,u_1;y}-R^2_{x,w;y}}{R^2_{x,w;y}-R^2_{x;y}},\end{equation} but it relies on some assumptions \citep[Section 3.2]{Oster2017}. There is another finite-population equation \begin{equation}\label{Cin}(\hat{\beta}_{x.y|w}-\hat{\beta}_{x.y|w,u_1})^2=\frac{\rho(r(y;x,w),r(u_1;x,w))^2\rho(r(x;w),r(u_1,w))^2}{1-\rho(r(x;w),r(u_1,w))^2}\textrm{SE}(\hat{\beta}_{x.y|w})^2\textrm{df}\end{equation} published in \citep{CH}, where SE is again a standard error and df is the degrees of freedom when $y$ is regressed onto $[x,w]$, and generalization from $u_1$ to multivariate $u$ is described in \citep[Section 4.5]{CH}. Social scientists are aware \citep{Frank} of the following formula for an adjusted slope coefficient: \begin{equation}\label{Franki}\beta_{x.y|u}=\left(\frac{\sigma_y}{\sigma_x}\right)\frac{\rho(x,y)-\rho(\hat{x}_u,x)\rho(\hat{x}_u,y)}{1-\rho^2(\hat{x}_u,x)}.\end{equation} 
The formula in (\ref{Franki}) is similar to a formula published in \citep{KD}, which forms the basis for the CI algorithm described in Subsection \ref{CI} and first published in \citep{KOA}. Using Proposition \ref{Danny}, the formula of \citep{KD} can be expressed in the notation of this paper as 
\begin{align}
\label{form}
&\beta_{x.y|w,u}=\beta_{r(x;w).r(y;w)|r(u;w)}=\\
&\nonumber\left(\frac{\sigma_{r(y;w)}}{\sigma_{r(x;w)}}\right)\frac{\rho(r(x;w),r(y;w))-R_{r(u;w);r(x;w)}R_{r(u;w);r(y;w)}\rho\left(\hat{r(x;w)}_{r(u;w)},\hat{r(y;w)}_{r(u;w)}\right)}{\left(1-R_{r(u;w);r(x;w)}^2\right)}.
\end{align}

The CI algorithm described in Subsection \ref{CI} works by using (\ref{form}) to optimize $\beta_{r(x;w).r(y;w)|r(u;w)}$ subject to the default bounds $-1\leq\rho\left(\hat{r(x;w)}_{r(u;w)},\hat{r(y;w)}_{r(u;w)}\right)\leq1$ and also user specified bounds on $R^2_{r(u;w);r(x;w)}$ and $R^2_{r(u;w);r(y;w)}$. Those bounds are obtained from (\ref{upperbounds}) using Proposition \ref{mainprop} and Corollary \ref{cor}. Note that $R^2_{r(u;w);r(x;w)}$ and $R^2_{r(u;w);r(y;w)}$ are coefficients of partial determination, as is $\rho(r(y;x,w),r(u_1;x,w))^2$ in (\ref{Cin}); see also (\ref{mainpropeq}). When subject matter knowledge is able to improve on the default bounds and more tightly bound $\rho\left(\hat{r(x;w)}_{r(u;w)},\hat{r(y;w)}_{r(u;w)}\right)$ then more precise partial identification is possible, and the available algorithms at \citep{Hughes25} are flexible enough to accept those tighter bounds as additional inputs for enhanced inference. 

Many of those formulas and equations in (\ref{simple}) through (\ref{Franki}) involve $\beta_{x.y|w}-\beta_{x.y|w,u}$ or a similar expression for omitted variables bias. Those formulas and expressions allow one to conclude that $\beta_{x.y|w,u}$ is close to $\beta_{x.y|w}$ from knowledge of the strengths of various linear relationships between variables. The idea behind our partial identification of $\beta_{x.y|w,u}$ is similar, but we do not bound $R^2_{w,u;x}$ and $R^2_{w,u;y}$ from supposed knowledge of relationships between variables. Our methodology is applicable even if we don't know which variables are in $u$. We bound $R^2_{w,u;x}$ and $R^2_{w,u;y}$ via (\ref{upperbounds}) and (\ref{mainpropeq}) by reasoning about determinism and randomness in the process or processes generating $x$ and $y$. Our approach is similar to the reasoning in \citep{PearlDM}. 
If $R^2_{w;x}$ and $R^2_{w;y}$ are much smaller than what is reasonably possible for coefficients of determination then there may be sufficient determinism remaining to theoretically construct a $u$ to make $\beta_{x.y|w,u}$ far from $\beta_{x.y|w}$. However, if $R^2_{w;x}$ and $R^2_{w;y}$ are only slightly less than what is reasonably possible for coefficients of determination then there is likely insufficient remaining determinism to theoretically construct a $u$ to make $\beta_{x.y|w,u}$ far from $\beta_{x.y|w}$, or said differently there is sufficient residual randomness to conclude that $\beta_{x.y|w,u}$ is close to $\beta_{x.y|w}$. For further and more detailed reading about sufficient randomness for causal inference see \cite{Knaeble2023,Knaeble2024}.

Here is a related and practical way to implement our methodology based on the idea that partial identification becomes point identification when the coefficients of partial determination $R^2_{r(u,w);r(x;w)}$ and $R^2_{r(u,w);r(y;w)}$ approach zero; see (\ref{mainpropeq}). First search for natural sources of randomness \citep{Rosen}, and then use that randomness to bound $R^2_{w,u;x}$ and $R^2_{w,u;y}$ with upper bounds $B_x=R^2_{s;x}$ and $B_y=R^2_{s;y}$, where $s$ is a set of covariates containing $\{w,u\}$, i.e. $\{w,u\}\subseteq s$. Ideally, $s$ will not contain variables that are indicators of chance events nor any randomly generated variables \citep{Pim}, but rather $s$ should contain variables that are fixed attributes of individuals \citep[Section 4.3]{Knaeble2024}. The covariates in $s$ should be defined prior to the natural randomness, and the potential outcomes should remain stable over the time it takes for that natural randomness to play out. Then, consistent with the advice of \citep{rubin2009} and following the principles of \citep{VandPrinc}, we select as many confounding variables in $s$ as possible to be measured and included within $w$ in hopes that $R^2_{w;x}$ and $R^2_{w;y}$ increase to approach the upper bounds $B_x$ and $B_y$, so that any remaining, unmeasured covariates in a set $u\subseteq (s\setminus w)$ play less of a role. 
We thus make the coefficients of partial determination $R^2_{r(u,w);r(y;w)}$ and $R^2_{r(u,w);r(x;w)}$ approach zero to support more precise partial identification of $\beta_{x.y|w,u}$ with our overall algorithm.
\bibliographystyle{plain}


\end{document}